\documentclass[10pt,conference,letterpaper]{ieeeconf}
\usepackage{epsfig}
\usepackage{amsmath}
\usepackage{amssymb,mathrsfs,dsfont}
\usepackage{enumerate}
\usepackage{color}
\usepackage{textcomp}
\usepackage{tikz}
\usepackage{pgfplots}
\usepackage{mathtools}
\usepackage{scrextend}
\usetikzlibrary{dsp,chains}

\usetikzlibrary{shapes,arrows}
\usetikzlibrary{positioning,shapes,shadows,arrows,scopes,decorations}
\usetikzlibrary{shapes.multipart}
\newtheorem{theorem}{Theorem}
\newtheorem{lemma}{Lemma}

\newtheorem{corollary}{Corollary}
\newtheorem{remark}{Remark}

\def\md{\mathbb}

\def\eps{\varepsilon}
\def\tn{\textnormal}
\def\wt{\widetilde}
\def\wh{\widehat}
\def\RealF{\md{R}}

\def\Expt{\md{E}}

\def\db{\mathrm{dB}}

\newcommand{\dfn}{ \stackrel{\tn{def}}{=} }

\newcommand{\moduloOp}[1] {\mathbb{M}_d[#1]}
\newcommand{\qfunc}[1] {Q\left(#1\right)}

\newcommand{\snr}{\mathrm{SNR}}
\newcommand{\bsnr}{\mathrm{S}\wt{\mathrm{N}}\mathrm{R}}
\newcommand{\dsnr}{\Delta\snr}
\newcommand{\Pe}{p_e}
\newif\ifShortVersion
\ShortVersiontrue

\IEEEoverridecommandlockouts

\overrideIEEEmargins

\usepackage[noadjust]{cite}

\begin{document}

\title{\LARGE \bf The Gaussian Channel with Noisy Feedback: Near-Capacity Performance via Simple Interaction}

\author{Assaf~Ben-Yishai and Ofer~Shayevitz\thanks{The authors are with the Department of EE--Systems, Tel Aviv University, Tel Aviv, Israel \{assafbster@gmail.com, ofersha@eng.tau.ac.il\}. The work of O. Shayevitz was partially supported by the Marie Curie Career Integration Grant (CIG), grant agreement no. 631983.}}

\maketitle

\begin{abstract}
Consider a pair of terminals connected by two independent additive white Gaussian noise channels, and limited by individual power constraints. The first terminal would like to reliably send information to the second terminal, within a given error probability. We construct an explicit interactive scheme consisting of only (non-linear) scalar operations, by endowing the Schalkwijk-Kailath noiseless feedback scheme with modulo arithmetic. Our scheme achieves a communication rate close to the Shannon limit, in a small number of rounds. For example, for an error probability of $10^{-6}$, if the Signal to Noise Ratio ($\snr$) of the feedback channel exceeds the $\snr$ of the forward channel by $20\db$, our scheme operates $0.8\db$ from the Shannon limit with only $19$ rounds of interaction. In comparison, attaining the same performance using state of the art Forward Error Correction (FEC) codes requires two orders of magnitude increase in delay and complexity. On the other extreme, a minimal delay uncoded system with the same error probability is bounded away by $9\db$ from the Shannon limit. 
\end{abstract}

\section{Introduction}\label{sec:introduction}

Feedback cannot improve the capacity of point-to-point memoryless channels \cite{ShannonFeedback}. Nevertheless, noiseless feedback can significantly simplify the transmission schemes and improve the error probability performance, see e.g. \cite{S-K_partI,S-K_partII,horstein,PM_Transactions}. These elegant schemes fail however in the presence of arbitrarily small feedback noise, rendering them grossly impractical. This fact has been initially observed in \cite{S-K_partII} for the \textit{Additive White Gaussian Noise} (AWGN) channel, and further strengthened in \cite{KimNoisyAWGNFeedbackAllertor}. A handful of works have tackled the problem of noisy feedback as means for improving error performance, see e.g. \cite{ChanceLove,SatoYamamoto,BurnashevNoisyBSC,BurnashevNoisyAWGNISIT}. However, these works attain their superior error performance at the cost of a significant increase in complexity w.r.t. their noiseless feedback counterparts. There appears to be no \textit{simple} scheme (in the spirit of 
\cite{S-K_partII,horstein,PM_Transactions}) that is robust to feedback noise known hitherto.

Our work is therefore motivated by the following question: Does the simplicity of the infeasible noiseless feedback schemes extend itself to the more realistic noisy feedback setup, while still offering near-optimal performance? While the answer to this question appears to be  negative if one insists on approaching capacity in the usual sense (vanishing error probability in the limit of large delay), we answer it here in the affirmative under a fixed (but small) error probability criterion. Specifically, we consider the following setup: Two Terminals A and B are connected by pair of independent AWGN channels, and are limited by individual power constraints. The channel from Terminal A (resp. B) to Terminal B (resp. A) is referred to as the feedforward (resp. feedback) channel. Terminal A wishes to send bits to Terminal B, within a given bit error probability. The figure-of-merit we look at is the \textit{capacity gap}, which is the amount of excess $\snr$ required by our scheme over the minimal possible $\snr$ for an optimal Shannon scheme (of unbounded complexity), achieving the same bit rate and bit error probability. For this setup, we introduce and analyze a simple interactive scheme, that can operate near capacity. Our construction is based on the Schalkwijk-Kailath (S-K) noiseless feedback scheme \cite{S-K_partII} with active feedback, endowed with modulo arithmetic. Loosely speaking, our scheme is founded on the following observations: 
\begin{enumerate}[(1)]
\item The capacity gap (in $\db$) attained by the S-K scheme (for noiseless feedback) is inversely proportional to the number of iterations, and hence capacity is approached in a small number of rounds.  
\item The S-K scheme can be described as follows. Terminal A encodes and sends its message via \textit{Pulse Amplitude Modulation} (PAM), and in subsequent rounds, sends a scaled version of the estimation error of Terminal B (which is computable due to noiseless feedback), thereby exponentially decreasing the variance of the total estimation error. This scheme can operate using only \textit{passive} feedback. Alternatively, Terminal B could clearly employ \textit{active} feedback by transmitting its current estimate of the message, rather then its observations. This simple tweak is meaningless in the noiseless feedback case, yet turns out to be essential when feedback is noisy.
\item Suppose the S-K scheme is used when noise is present in the feedback channel. In each round, Terminal B knows the sum of the estimation error and the PAM message, whereas Terminal A knows the PAM message only. Describing the estimation error to Terminal A over the feedback channel is therefore a joint source-channel coding problem with side information at the receiver. Exploiting the side information could potentially yield a markedly better description of the estimation error. One simple way to reap this gain is by employing modulo arithmetic in the spirit of Tomlinson-Harashmia precoding \cite{Tomlinson,Harashima}. 
\item Following the above joint source-channel coding procedure, the estimation error of Terminal B becomes known at Terminal A, up to some excess additive noise induced by the noisy feedback. Due to the modulo-linearity of the operations, this excess noise can be effectively pushed into the forward channel.
\end{enumerate}

In a nutshell, our scheme operates as follows. Terminal A encodes and sends its message using PAM. In subsequent rounds, Terminal B computes its best linear estimate of the message, and feeds back a scaled version of that estimate, modulo a fixed interval. In turn, Terminal A employs a suitable modulo computation and obtains the estimation error, corrupted by excess additive noise. This quantity is then properly scaled and sent over the feedforward channel to Terminal B. After a fixed number of rounds, Terminal B decodes the message via a simple minimum distance rule. Loosely speaking, the scheme's error probability is dictated by the events of a modulo aliasing in one of the rounds, and the event where the remaining estimation noise in the last round exceeds half the minimum distance of the PAM constellation. The maximal number of rounds is limited by the need to control the modulo-aliasing errors. 

The resulting capacity gap (Theorem \ref{thrm:capgap}) consists of four terms: 1) An ``S-K term'' that is inversely proportional to the number of rounds; 2) A ``concatenated channel'' term, that corresponds to the decrease in $\snr$ incurred by trivially concatenating the forward and feedback channels, and is (roughly) inversely proportional to the excess $\snr$ of the feedback channel over the feedforward channel; 3) a ``modulo-aliasing'' term that stems from the error floor imposed by employing the modulo operation, and is (roughly) inversely proportional to the $\snr$ of the feedback channel; and 4) An auxiliary term that is (roughly) inversely proportional to the $\snr$ of the feedforward channel. 

As an example, for a bit error probability of $10^{-6}$, if the $\snr$ of the feedback channel exceeds the $\snr$ of the feedforward channel by $20\db$ (resp. $10\db$), our scheme operates at a capacity gap of $0.8\db$ (resp. $3.5\db$), with only $19$ (resp. $11$) rounds of interaction. This should be juxtaposed against two reference systems: On the one hand, state-of-the-art FEC codes attaining the same capacity gap and error probability, require roughly a two orders-of-magnitude increase in delay and complexity. On the other hand, the capacity gap attained by a minimal delay uncoded system with the same error probability, is at least $9\db$. 

The rest of the paper is organized as follows. The problem setup is introduced in Section \ref{sec:setup}. Necessary background including the capacity gap of uncoded PAM and an active feedback representation of the S-K scheme are given in Section \ref{sec:perlim}. Our new scheme is described in Section \ref{ourscheme}, and its performance is discussed in Section \ref{sec:main-res}. A detailed analysis of the scheme is provided in Section \ref{sec:proof}. Some numerical results and figures are given in Section \ref{results}. Implementation issues and the applicability of our scheme to real world scenarios are treated In Section \ref{sec:implem}. A discussion of the results and their context appears in Section \ref{sec:discussion}. 
 
\section{Setup}\label{sec:setup}
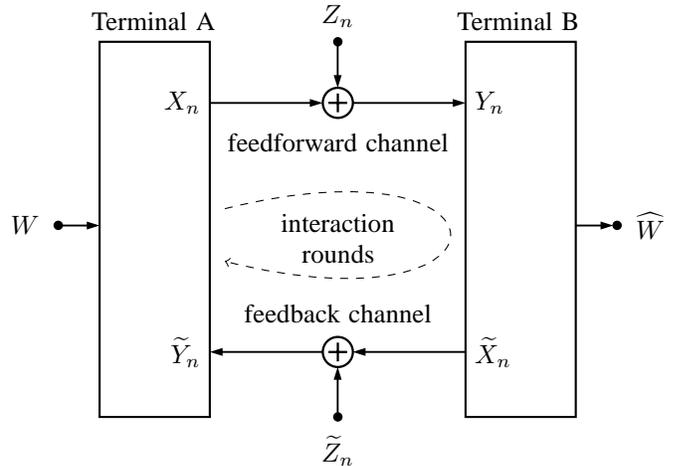
\begin{figure}
\centering
\begin{tikzpicture}
[every text node part/.style={align=center}]

	\matrix (m1) [row sep=2.5mm, column sep=5mm]
	{  

                \node[coordinate] (){};&
                \node[coordinate] (){};&
                \node[coordinate] (){};&
                \node[coordinate] (r1ur){};&
                \node[coordinate] (){};&
                \node[coordinate] (){};&
                \node[dspnodefull,dsp/label=above]   (z1){${Z}_n$};&
                \node[coordinate] (){};&
                \node[coordinate] (){};&
                \node[coordinate] (r2ur){};&
                \\
                \\
                \node[coordinate] (){};&
                \node[coordinate] (){};&
                \node[coordinate] (){};&
		\node[left]                  (x1) {$X_n$};   &
                \node[coordinate] (){};&
                \node[coordinate] (){};&
		\node[dspadder,dsp/label=below]  (a1) {feedforward channel}; &
                \node[coordinate] (){};&
                \node[coordinate] (){};&
		\node[right]                  (y1) {$Y_n$};  &

		\\ \\ \\ \\ \\
                \node[coordinate] (){};&
                \node[dspnodefull,dsp/label=left]   (win){$W$};&
                \node[right] (wout){};&
                \node[coordinate] (){};&
                \node[coordinate] (){};&
                \node[coordinate] (){};&
                \node[coordinate] (){};&
                \node[coordinate] (){};&
                \node[coordinate] (){};&
                \node[coordinate] (){};&
                \node[left] (whin){};&
                \node[dspnodefull,dsp/label=right]   (whout){$\widehat{W}$};&
                \\ \\ \\ \\ \\

                \node[coordinate] (){};&
                \node[coordinate] (){};&
                \node[coordinate] (){};&
                \node[left] (y2){$\widetilde{Y}_n$};&
                \node[coordinate] (){};&
                \node[coordinate] (){};&
                \node[dspadder,dsp/label=above]   (a2){feedback channel};&
                \node[coordinate] (){};&
                \node[coordinate] (){};&
                \node[right] (x2){$\widetilde{X}_n$};&
                \\  \\

                \node[coordinate] (){};&
                \node[coordinate] (){};&
                \node[coordinate] (r1ld){};&
                \node[coordinate] (){};&
                \node[coordinate] (){};&
                \node[coordinate] (){};&
                \node[dspnodefull,dsp/label=below]   (z2){$\widetilde{Z}_n$};&
                \node[coordinate] (){};&
                \node[coordinate] (){};&
                \node[coordinate] (){};&
                \node[coordinate] (r2ld){};&
                \node[coordinate] (){};&\\ \\
	};
        
          \draw[dspconn] (x1) -- (a1);
          \draw[dspconn] (a1) -- (y1);
          \draw[dspconn] (z1) -- (a1);
          \draw[dspconn] (x2) -- (a2);
          \draw[dspconn] (a2) -- (y2);
          \draw[dspconn] (z2) -- (a2);
          \draw[dspconn] (win) -- (wout);
          \draw[dspconn] (whin) -- (whout);

	\draw [color=black,thick](r1ur) rectangle (r1ld);
	\node at (-2.45,2.9) [] {\textrm{Terminal A}};

	\draw [color=black,thick](r2ur) rectangle (r2ld);
	\node at (2.4,2.9) [] {\textrm{Terminal B}};

       \def\xoffset{-10}
       \def\yoffset{1.2}
        \draw [->,dashed] (8.5+\xoffset,-0.8+\yoffset) .. controls (12.5+\xoffset,0.2+\yoffset) and (12.5+\xoffset,-2.5+\yoffset) .. (8.5+\xoffset,-1.5+\yoffset);
        \node at (0,0)  {{interaction} \\ {rounds}};

\end{tikzpicture}
\caption{\label{fig:blockdiagram}Block diagram of interactive coding over AWGN with noisy feedback}
\end{figure}

In the sequel, we use the following notations. For any number $x>0$, we write $x_{\db}\dfn10\log_{10}(x)$ to denote the value of $x$ in decibels. The Gaussian Q-function is  $\qfunc{x}\dfn(2\pi)^{-\frac{1}{2}}\int_{x}^{\infty}\exp\left( -u^2/2\right)du$, and $Q^{-1}(\cdot)$ is its functional inverse. We use the vector notation $x^n\dfn (x_1,\ldots,x_n)$. We write $f(x) = \mathrm{O}(g(x))$ for  $\mathrm{limsup}_{x\to\infty} \left|f(x)/g(x)\right| < \infty$.

Our problem setup is depicted in Fig.~\ref{fig:blockdiagram}. 
The feedforward and feedback channels connecting Terminal A to Terminal B and vice versa respectively, are AWGN channels given by 
\begin{align}
Y_n=X_n+Z_n ,  \\
\wt{Y}_n=\wt{X}_n+\wt{Z}_n . 
\end{align}
Where $X_n, Y_n$ (resp. $\wt{X}_n,\wt{Y}_n$) are the input and output of the feedforward (resp. feedback) channel at time $n$ respectively. The feedforward (resp. feedback) channel noise $Z_n\sim \mathcal{N}(0,\sigma^2)$ (resp. $\wt{Z}_n\sim \mathcal{N}(0,\wt{\sigma}^2)$) is independent of the input $X_n$ (resp. $\wt{X}_n$), and constitutes an i.i.d. sequence. The feedforward and feedback noise processes are mutually independent. 

Terminal A is in possession of a message $W\sim \textrm{Uniform}([M])$, to be described to Terminal B over $N$ rounds of communication. To that end, the terminals can employ an interactive scheme defined by a pair of functions $(\varphi,\wt{\varphi})$ as follows: At time $n$, Terminal A sends a function of its message $W$ and possibly of past feedback channel outputs over the feedforward channel, i.e., 
\begin{align}
  X_n=\varphi(W,\wt{Y}^{n-1}).
\end{align}
Similarly, Terminal B sends function of its past observations to Terminal A over the feedback channel, i.e., 
\begin{align}
  \wt{X}_n=\wt{\varphi}(Y^n).
\end{align}
\begin{remark}
The dependence of $\varphi$ and $\wt{\varphi}$ on $n$ is suppressed. In general, we allow these functions to further depend on common randomness shared by the terminals. A general interactive scheme can therefore be very complex; however, in what follows we will present and discuss a scheme that is very simple. We note in passing that our definition of the feedback transmission scheme is sometimes referred to as \textit{active feedback}; the term \textit{passive feedback} is usually reserved to the special case where $\wt{\varphi}(Y^n)=Y_n$. 
\end{remark}

We assume that Terminal A (resp. Terminal B) is subject to a power constraint $P$ (resp. $\wt{P}$), namely
\begin{align}
\sum_{n=1}^N\mathbb{E}(X_n^2) \leq N\cdot P, \quad \sum_{n=1}^N\mathbb{E}(\wt{X}_n^2) \leq N\cdot \wt{P} .
\end{align}
We denote the feedforward (resp. feedback) $\snr$ by $\snr\dfn\frac{P}{\sigma^2}$ 
(resp.  $\bsnr\dfn \frac{\wt{P}}{\wt{\sigma}^2}$). The ratio between the feedback $\snr$ and the feedforward $\snr$ is denoted by $\dsnr\dfn\frac{\bsnr}{\snr}$. Throughout this work, we assume that the feedback channel has excess $\snr$ over the feedforward
channel, i.e. $\dsnr>1$.

An interactive scheme $(\varphi,\wt{\varphi})$ is associated with a rate $R\dfn \frac{\log{M}}{N}$ and an error probability $\Pe$, which is the probability that Terminal B errs in decoding the message $W$ at time $N$, under the optimal decision rule.

The \textit{capacity gap} $\Gamma$ attained by the scheme is defined as follows. Recall that the Shannon capacity of the AWGN implies that the maximal rate achievable by any scheme (of unbounded complexity/delay, with or without feedback) under vanishing error probability, is given by 
\begin{equation}
\label{eq:AWNGcap}
C = \frac{1}{2}\log(1+\snr).
\end{equation}
Conversely, the minimal $\snr$ required to attain a rate $R$ is $2^{2R}-1$. The capacity gap is the excess $\snr$ required by the scheme, i.e., 
\begin{align}
\label{eq:capGapDef}
  \Gamma(\varphi,\wt{\varphi})  = \Gamma \dfn \frac{\snr}{2^{2R}-1}.
\end{align}
Note that if a nonzero bit/symbol error probability is allowed, then one can achieves rates exceeding the Shannon capacity \eqref{eq:AWNGcap}, and this effect should in principle be accounted for, to make the definition of the capacity gap fair. However, for small error probabilities the associated correction factor (given by the inverse of the corresponding rate distortion function) becomes negligible, and we therefore ignore it in the sequel.

\section{Preliminaries}\label{sec:perlim}
In this section, we describe the three building blocks underlying our interactive scheme. First, we discuss the performance of uncoded PAM transmission, and the associated capacity gap. We then describe the S-K scheme with active (noiseless) feedback, and derive the associated decay of the capacity gap as a function of the number of interaction rounds. Lastly, we briefly discuss the notations and properties of modulo arithmetic to be used in our scheme. 

\subsection{Uncoded PAM} 
\label{pambasic}
PAM is a simple and commonly used modulation scheme, where $2^R$ symbols are mapped (one-to-one) to the set $\{\pm 1\eta,\pm 3\eta,\cdots,\pm (2^R-1)\eta\}$. Canonically, the normalization factor $\eta$ is set so that the overall mean square of the constellation (assuming equiprobable symbols) is unity. A straightforward calculation yields $\eta = \sqrt{3/\left( 2^{2R}-1\right)}$. In the general case where the mean square of the constellation is constrained to be $P$, $\eta$ is replaced with $\eta\sqrt{P}$. 

It is easy to show that for an AWGN channel with zero mean noise of variance $\sigma^2$ and average input power constraint $P$, the probability of error incurred by the optimal detector is bounded by the probability that the noise exceed half the minimal distance of the PAM constellation, i.e.,
\begin{equation}
\label{eq:PeUncodedPAM}
\Pe<2\qfunc{\frac{\sqrt{P}\eta}{\sigma}}=2\qfunc{\sqrt{\frac{3\snr}{2^{2R}-1}}}.
 \end{equation}
Fixing the error probability $p_e$ and solving the inequality \eqref{eq:PeUncodedPAM} for $R$ yields:
\begin{equation}
\label{eq:capGap}
R>\frac{1}{2}\log\left(1+\frac{\snr}{\Gamma} \right),
\end{equation}
where 
\begin{align}
\label{eq:gammaPAM}
\Gamma_0(\Pe)\dfn\frac{1}{3}\left[Q^{-1}\left(\frac{\Pe}{2}\right)\right]^2.  
\end{align}
Comparing \eqref{eq:capGap} and \eqref{eq:AWNGcap}, we see that PAM signaling with error probability $\Pe$ admits a capacity gap of $\Gamma_0(\Pe)$. For a typical value of $\Pe=10^{-6}$, the capacity gap of uncoded PAM is $\Gamma_{0,\db} = 9\db$. 

Finally, we assume as usual that bits are mapped to PAM constellation symbols via  Gray labeling. The associated bit error probability can thus be bounded by
\begin{align}\label{eq:pb}
p_b<\frac{2}{R}\qfunc{\frac{\sqrt{P}\eta}{\sigma}}+2\qfunc{3\frac{\sqrt{P}\eta}{\sigma}} \approx \frac{\Pe}{R}.
\end{align}
where the approximation is becomes tight for small $\Pe$ due to the strong decay of the Q-function.

\subsection{The S-K Scheme with Active Feedback}
\label{s-kbasic}
Consider the setting of communication over the AWGN with noiseless feedback, i.e., where $\wt{\sigma}^2 = 0$. The S-K scheme with active feedback is described as follows. First, Terminal A  maps the message $W$ to a PAM constellation point $\Theta$. In the first round, it sends a scaled version of $\Theta$ satisfying the power constraint $P$. In subsequent rounds, Terminal B maintains an estimate $\wh{\Theta}_n$ of $\Theta$ given all the observation it has, and feeds it back to Terminal A. Terminal A then computes the estimation error $\eps_n\dfn \wh{\Theta}_n-\Theta$, and sends a properly scaled version of it to Terminal B. Formally:
\begin{enumerate}[(A)]
\item Initialization:
\begin{enumerate}[]
\item \textbf{Terminal A:} Map the message $W$ to a PAM point $\Theta$.  
\item \textbf{Terminal A $\Rightarrow$ Terminal B:} 
  \begin{itemize}
  \item Send $X_1=\sqrt{P}\Theta$
    \item Receive $Y_1=X_1+Z_1$
  \end{itemize}  

\item \textbf{Terminal B:} Initialize the $\Theta$ estimate\footnote{\label{fn1}Note that this is the minimum variance unbiased estimate of $\Theta$.}
to $\wh{\Theta}_1=\frac{Y_1}{\sqrt{P}}$. 
\end{enumerate}
\item Iteration:
\begin{enumerate}[]
\item \textbf{Terminal B $\Rightarrow$ Terminal A:} 
  \begin{itemize}
  \item Send the current $\Theta$ estimate: $\wt{X}_n= \wh{\Theta}_n$
  \item Receive $\wt{Y}_n=\wt{X}_n$
  \end{itemize}  
\item \textbf{Terminal A:} Compute the estimation error $\varepsilon_n = \wt{Y}_n - \Theta$.
\item \textbf{Terminal A $\Rightarrow$ Terminal B:} 
  \begin{itemize}
  \item Send the scaled estimation error $X_{n+1}=\alpha_n \varepsilon_n$, where $\alpha_n=\frac{\sqrt{P}}{\sigma_n}$ so that the input power constraint holds, and where $\sigma_n^2\dfn \Expt{\eps_n^2}$.
  \item Receive $Y_n=X_n+Z_n$
  \end{itemize}  
\item \textbf{Terminal B:} 
Update the $\Theta$ estimate\footref{fn1} $\wh{\Theta}_{n+1}=\wh{\Theta}_n-\wh{\varepsilon}_n$, where  
\begin{align}\label{eq:eps_est}
\wh{\varepsilon}_n=\beta_{n+1}Y_{n+1}  
\end{align}
is the \textit{Minimum Mean Square Error} (MMSE) estimate of $\varepsilon_n$, thus
\begin{equation}
\beta_{n+1}=\frac{\sqrt{P\sigma_n^2}}{P+\sigma^2}=\frac{\sigma_n}{\sigma}\cdot\frac{\sqrt{\snr}}{1+\snr}.
\end{equation}
\end{enumerate}
\item Decoding: 

At time $N$ the receiver decodes the message using a minimum distance decoder for $\wh{\Theta}_N$ w.r.t. the PAM constellation. 
\end{enumerate}

To calculate the error probability and rate attained by the S-K scheme, we note that $\eps_{n+1} = \eps_n - \wh{\eps}_n$. Computing the corresponding variance by plugging in the optimal values of $\alpha_n,\beta_n,Y_n$ yields
\begin{equation}
\label{eq:sigman}
\sigma_{n+1}^2=\frac{\sigma_{n}^2}{1+\snr}=\frac{1}{\snr\left(1+\snr\right)^{n}}.
\end{equation}
Since the power of $\Theta$ is normalized to unity, this is equivalent to signaling over an AWGN channel with $SNR_N=\sigma_N^{-2}$, i.e.
\begin{equation}
\label{eq:snreq}
\snr_N=\snr\cdot(1+\snr)^{N-1}.
\end{equation}
Plugging $\snr_N$ into \eqref{eq:PeUncodedPAM} and bounding the Q-function by 
$Q(x)<\frac{1}{2}\exp(-\frac{1}{2}x^2)$ gives:
\begin{equation}
\Pe<\frac{1}{2}\exp\left(-\frac{3}{2}\frac{\snr\cdot(1+\snr)^{N-1}}{2^{2NR}-1}\right).
\end{equation}
Plugging in the AWGN capacity \eqref{eq:AWNGcap} and removing the ``$-1$'' term, we obtain:
\begin{equation}
\Pe<\tfrac{1}{2}\exp\left(-\tfrac{3}{2}\tfrac{\snr}{1+\snr} \cdot 2^{2N(C-R)}\right).
\end{equation}
which is the well-known doubly exponential decay of the error probability of the S-K scheme. 

Let us now provide an alternative interpretation of the S-K scheme performance, in terms of the capacity gap attained after a finite number of rounds. Plugging $\snr_N$ in \eqref{eq:capGap} yields:
\begin{equation}
\label{eq:targetRate}
R>\frac{1}{2N}\log\left(1+\frac{\snr\cdot(1+\snr)^{N-1}}{\Gamma} \right).
\end{equation}
Plugging the resulting $R$ in the definition of the capacity gap \eqref{eq:capGapDef} 
and assuming $\snr\gg 1$ yields the following approximation for high $\snr$:
\begin{align}
  \Gamma_{\db}^{\text{S-K}}(\Pe,N) \approx \frac{\Gamma_{0,\db}(\Pe)}{N}.
\end{align}
This behavior is depicted by the dashed curve in Fig.~\ref{fig:resultsFigR4}.

\subsection{Modulo Arithmetic}
\label{modArithmetics}
We briefly overview basic notations and properties of modulo arithmetic. For a given  $d>0$, define the modulo function 
\begin{align}
\moduloOp{x}\dfn x-d\cdot\textrm{round}\left(\frac{x}{d}\right)  
\end{align}
where the \textrm{round$(\cdot)$} operator returns nearest integer to its argument\footnote{We arbitrarily define $\textrm{round}\left(k+\frac{1}{2}\right)=k+1$ for every integer $k$}. The following properties are easily verified: 
  \begin{enumerate}[(i)]
  \item $\moduloOp{x}\in[-\frac{d}{2},\frac{d}{2})$
  \item if $d_1+d_2\in[-\frac{d}{2},\frac{d}{2})$, then 
    \begin{align}\label{eq:modrel}
      \moduloOp{\moduloOp{x+d_1}+d_2-x}=d_1+d_2.
    \end{align}
    otherwise, a \textit{modulo-aliasing error} term of $kd\neq 0$ is added to the right-hand-side \eqref{eq:modrel}, for some integer $k$. 
  \item Let $V\sim\textrm{Uniform}([-\frac{d}{2},\frac{d}{2}))$. Then $\moduloOp{x+V}$ is uniformly distributed over $[-\frac{d}{2},\frac{d}{2})$ for any $x\in\RealF$. 
\item Therefore, $\mathbb{E}(\moduloOp{x+V})^2=\frac{d^2}{12}$. 
\end{enumerate}

\section{The Proposed Scheme}\label{ourscheme}
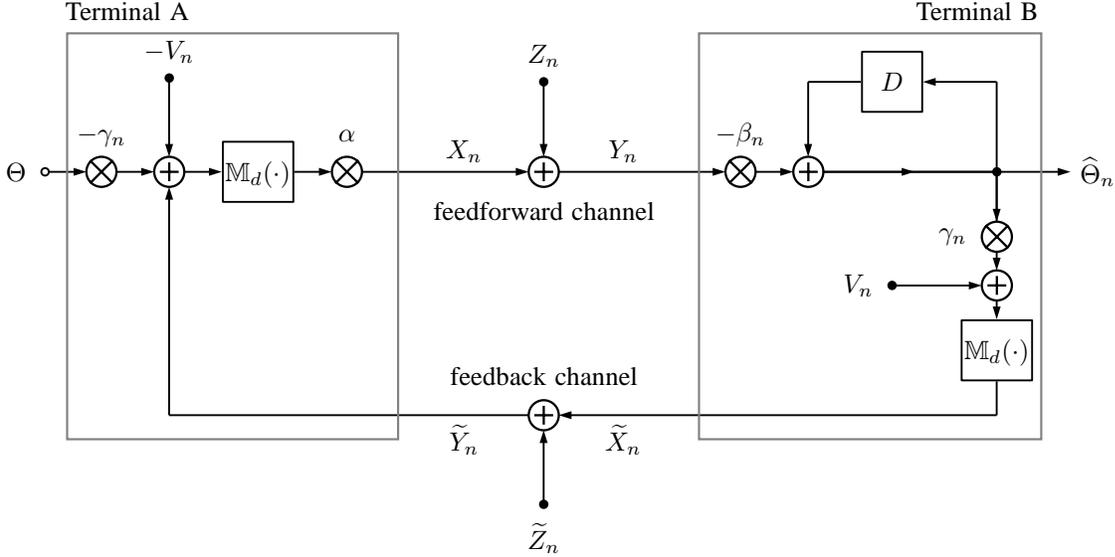
\begin{figure*}
\centering
\begin{tikzpicture}

	\matrix (m1) [row sep=2.5mm, column sep=5mm]
	{
                \node[coordinate] (){};&
                \node[coordinate] (){};&
                \node[dspnodefull,above] (mf1){$-V_{n}$};&
                \node[coordinate] (){};&
                \node[coordinate] (){};&
                \node[coordinate] (){};&
                \node[coordinate] (){};&
                \node[dspnodefull,dsp/label=above]   (mf4){${Z}_n$};&
                \node[coordinate] (){};&
                \node[coordinate] (){};&
                \node[coordinate] (){};&
                \node[coordinate] (mf6){};&
                \node[dspsquare] (mf7){$D$};&
		\node[coordinate] (mf8){}; &
                \node[coordinate] (){};&\\

		\node[dspnodeopen,dsp/label=left] (m0f) {$\Theta$};    &
		\node[dspmixer,dsp/label=above]   (m00) {$-\gamma_{n}$};          &
		\node[dspadder]                    (m01) {};          &
		\node[dspsquare, inner xsep=1pt]                   (m02) {$\mathbb{M}_d(\cdot)$}; &
		\node[dspmixer]                    (m03) {$\alpha$};    &
		\node[coordinate]                  () {};          &
		\node[above]                  () {$X_n$};          &
		\node[dspadder,dsp/label=below]  (m04) {{feedforward channel}}; &
		\node[above]                  () {$Y_n$};          &
		\node[coordinate]                  () {};          &
		\node[dspmixer]                    (m05) {$-\beta_n$};          &
		\node[dspadder]                    (m06) {}; &
		\node[coordinate]                  (m07) {};          &
		\node[dspnodefull] (m08) {};          &
		\node[right] (m09) {$\widehat{\Theta}_n$};          &
		\\
                \node[coordinate] (){};&
                \node[coordinate] (){};&
                \node[coordinate] (m11){};&
                \node[coordinate] (){};&
                \node[coordinate] (){};&
                \node[coordinate] (){};&
                \node[coordinate] (){};&
                \node[coordinate] (){};&
                \node[coordinate] (){};&
                \node[coordinate] (){};&
                \node[coordinate] (){};&
                \node[coordinate] (){};&
                \node[coordinate] (){};&
                \node[dspmixer,dsp/label=left]  (m18){$\gamma_n$};&
                \node[coordinate] (){};&\\

                \node[coordinate] (){};&
                \node[coordinate] (){};&
                \node[coordinate] (){};&
                \node[coordinate] (){};&
                \node[coordinate] (){};&
                \node[coordinate] (){};&
                \node[coordinate] (){};&
                \node[coordinate] (){};&
                \node[coordinate] (){};&
                \node[coordinate] (){};&
                \node[coordinate] (){};&
                \node[coordinate] (){};&
                \node[dspnodefull,dsp/label=left] (md7){$V_n$};&
                \node[dspadder]   (md8){};&
                \node[coordinate] (){};&
                \\

		--------------------------------------------------------------------
                \node[coordinate] (){};&
                \node[coordinate] (){};&
                \node[coordinate] (){};&
                \node[coordinate] (){};&
                \node[coordinate] (){};&
                \node[coordinate] (){};&
                \node[coordinate] (){};&
                \node[coordinate] (){};&
                \node[coordinate] (){};&
                \node[coordinate] (){};&
                \node[coordinate] (){};&
                \node[coordinate] (){};&
                \node[coordinate] (){};&
		\node[dspsquare, inner xsep=1pt]  (m28){$\mathbb{M}_d(\cdot)$}; &
                \node[coordinate] (){};&\\
                \node[coordinate] (){};&
                \node[coordinate] (){};&
                \node[coordinate] (m31){};&
                \node[coordinate] (){};&
                \node[coordinate] (){};&
                \node[coordinate] (){};&
                \node[below] (){$\widetilde{Y}_n$};&
                \node[dspadder,dsp/label=above]   (m34){{feedback channel}};&
                \node[below] (){$\widetilde{X}_n$};&
                \node[coordinate] (){};&
                \node[coordinate] (){};&
                \node[coordinate] (){};&
                \node[coordinate] (){};&
		\node[coordinate] (m38){}; &
                \node[coordinate] (){};&
                \\
                \\
                \node[coordinate] (){};&
                \node[coordinate] (){};&
                \node[coordinate] (){};&
                \node[coordinate] (){};&
                \node[coordinate] (){};&
                \node[coordinate] (){};&
                \node[coordinate] (){};&
                \node[dspnodefull,dsp/label=below]   (m44){$\widetilde{Z}_n$};&
                \node[coordinate] (){};&
                \node[coordinate] (){};&
                \node[coordinate] (){};&
                \node[coordinate] (){};&
                \node[coordinate] (){};&
		\node[coordinate] (){}; &
                \node[coordinate] (){};&\\ \\
	};
        
	\begin{scope}[start chain]
	\chainin (m00);
         \foreach \i in {1,2,3,4,5,6,9}
         {
		\chainin (m0\i) [join=by dspconn];
         }
        \chainin (m08);
        \chainin (m18) [join=by dspconn];
        \chainin (md8) [join=by dspconn];
        \chainin (m28) [join=by dspconn];
        \chainin (m38) [join=by dspline];
        \chainin (m34) [join=by dspconn];
        \chainin (m31) [join=by dspline];
        \chainin (m01) [join=by dspconn];
        \draw[dspconn] (m44) -- (m34);
        \draw[dspconn] (mf4) -- (m04);
        \draw[dspline] (m18) -- (mf8);
        \draw[dspconn] (mf8) -- (mf7);
        \draw[dspline] (mf7) -- (mf6);
        \draw[dspconn] (mf6) -- (m06);
        \draw[dspflow] (m06) -- (m08);
        \draw[dspconn] (mf1) -- (m01);
        \draw[dspconn] (m0f) -- (m00);
        \draw[dspconn] (md7) -- (md8);

	\end{scope}

	\draw [color=gray,thick](-7.1,-2) rectangle (-2.7,3.4);
	\node at (-6.3,3.7) [] {\textrm{Terminal A}};

	\draw [color=gray,thick](1.3,-2) rectangle (5.85,3.4);
	\node at (5,3.7) [] {\textrm{Terminal B}};

\end{tikzpicture}

\caption{\label{fig:blockdiagram}Block diagram of Our Scheme}
\end{figure*}

In what follows we assume that the terminals share a common random i.i.d sequence $\{V_n\}_{n=1}^N$ where $V_n\sim\textrm{Uniform}([-\frac{d}{2},\frac{d}{2}))$. Furthermore, we set $d=\sqrt{12\wt{P}}$ which guarantees that  $\mathbb{E}(\moduloOp{x+V_n})^2=\wt{P}$ for any $x\in\RealF$. Recall that the estimation of the PAM point at Terminal B and time instance $n$ is denoted by $\wh{\Theta}_n$, and the associated estimation error by $\eps_n\dfn\wh{\Theta}_n-\Theta$. 

Our scheme is described below. 
\begin{enumerate}[(A)]
\item Initialization:
\begin{enumerate}[]
\item \textbf{Terminal A:} Map the message $W$ to a PAM point $\Theta$.  
\item \textbf{Terminal A $\Rightarrow$ Terminal B:} 
  \begin{itemize}
  \item Send $X_1=\sqrt{P}\Theta$
    \item Receive $Y_1=X_1+Z_1$
  \end{itemize}  

\item \textbf{Terminal B:} Initialize the $\Theta$ estimate\footnote{\label{fn2}Note that this is the minimum variance unbiased estimate of $\Theta$.}
to $\wh{\Theta}_1=\frac{Y_1}{\sqrt{P}}$. 
\end{enumerate}
\item Iteration:
\begin{enumerate}[]
\item \textbf{Terminal B $\Rightarrow$ Terminal A:} 
  \begin{itemize}
  \item Given the $\Theta$ estimate $\wh{\Theta}_n$, compute and send 
    \begin{align}
      \wt{X}_n= \moduloOp{\gamma_n\wh{\Theta}_n+V_n}
    \end{align}
  \item Receive $\wt{Y}_n=\wt{X}_n + \wt{Z}_n$
  \end{itemize}  
\item \textbf{Terminal A:} Extract a noisy scaled version of
estimation error $\varepsilon_n$:
\begin{align}\label{eq:noisy_estimate}
\wt{\varepsilon}_n=\moduloOp{\wt{Y}_n-\gamma_n{\Theta}-V_n} 
\end{align}
Note that $\wt{\varepsilon}_n = \gamma_n\eps_n + \wt{Z}_n$, unless a \textit{modulo-aliasing error} occurs. 
 \item \textbf{Terminal A $\Rightarrow$ Terminal B:} 
  \begin{itemize}
  \item Send a scaled version of $\wt{\varepsilon}_n$:
$X_{n+1}=\alpha\wt{\varepsilon}_n$, where $\alpha$ is set so that to meet the input power constraint $P$ (computed later). 
  \item Receive $Y_n=X_n+Z_n$
  \end{itemize}  
\item \textbf{Terminal B:} 
Update the $\Theta$ estimate\footref{fn2} $\wh{\Theta}_{n+1}=\wh{\Theta}_n-\wh{\varepsilon}_n$, where  
\begin{align}
\wh{\varepsilon}_n=\beta_{n+1}Y_{n+1}  
\end{align}
is the MMSE estimate of $\varepsilon_n$. The optimal selection of $\beta_n$ is described in the sequel. 
\end{enumerate}
\item Decoding: 

At time $N$ the receiver decodes the message using a minimum distance decoder for $\wh{\Theta}_N$ w.r.t. the PAM constellation. 
\end{enumerate}

\section{Main Result}\label{sec:main-res}
Recall the capacity gap function $\Gamma_0(\cdot)$ of uncoded PAM given in \eqref{eq:gammaPAM}. Fix some target error probability $\Pe$.  Define: 
\begin{align}
\label{eq:penalty}
\nonumber\lambda &\dfn 3\left[Q^{-1}\left(\frac{\Pe}{4N}\right)\right]^{-2} \\
\nonumber\Psi_1 &\dfn 1+(\lambda\cdot\dsnr)^{-1}  \\ 
\nonumber\Psi_2 &\dfn \frac{1}{1-(\lambda\cdot\bsnr)^{-1}} \\
\Psi_3 &\dfn \frac{10/\ln{10}}{\snr\cdot \Psi_1^{-\frac{N-1}{N}}\Psi_2^{-\frac{N-1}{N}}\Gamma^{-\frac{1}{N}}_{0}\left(\frac{\Pe}{2}\right) -1} 
\end{align}
 
\begin{theorem}\label{thrm:capgap}
For a proper choice of parameters, the interactive communication scheme described in Section \ref{ourscheme} achieves in $N$ rounds an error probability $\Pe$ and a capacity gap $\Gamma_{\db}^*$ satisfying: 
\begin{align}
  \label{eq:capgapourshceme}
\Gamma_{\db}^*(\Pe,N) <  \tfrac{1}{N}\Gamma_{0,\db}(\tfrac{\Pe}{2}) + \tfrac{N-1}{N}\left(\Psi_{1,\db}  + \Psi_{2,\db}\right)  + \Psi_3 \phantom{\frac{1}{\frac{1}{2}}}
\end{align}
\end{theorem}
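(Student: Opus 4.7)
My plan is to decompose the error event as the union of (i) a modulo-aliasing error in some round, and (ii) a PAM decoding error at the end, conditional on no aliasing. On the ``good'' event $\mathcal{A}$ that no round suffers an aliasing error, the modulo identity \eqref{eq:modrel}, applied with $x=\gamma_n\Theta+V_n$, $d_1=\gamma_n\eps_n$, and $d_2=\wt Z_n$, linearizes the feedback chain to $\wt\eps_n = \gamma_n\eps_n + \wt Z_n$, so that $Y_{n+1} = \alpha\gamma_n\eps_n + \alpha\wt Z_n + Z_{n+1}$ is a clean AWGN observation of $\eps_n$. A standard MMSE step then yields the variance recursion $\sigma_{n+1}^2 = \sigma_n^2/(1+\snr_e)$ with $\snr_e \dfn (\alpha\gamma_n)^2\sigma_n^2/(\alpha^2\wt\sigma^2+\sigma^2)$, starting from $\sigma_1^2 = 1/\snr$.

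Next I would choose the parameters. Property (iii) of the modulo operation, together with $d=\sqrt{12\wt P}$, automatically enforces the feedback power constraint and makes $\wt X_n$ uniform on $[-d/2,d/2)$ independently of $\wh\Theta_n$, so that the aliasing event at round $n$ reduces to $|\gamma_n\eps_n + \wt Z_n|\ge d/2$. I set $\gamma_n$ so that $\gamma_n^2\sigma_n^2 + \wt\sigma^2 = \lambda\wt P$; an induction then shows that conditional on no prior aliasing and on $\Theta$, $\eps_n\sim\mathcal{N}(0,\sigma_n^2)$ is independent of $\wt Z_n$, so the per-round aliasing probability is at most $2Q(\sqrt{3/\lambda})=\Pe/(2N)$ by the choice of $\lambda$, and a union bound gives $\Pr(\mathcal{A}^c)\le\Pe/2$. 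The forward constraint then forces $\alpha^2 = P/(\lambda\wt P)$, and substituting yields the round-independent effective SNR $\snr_e = \snr(1-1/(\lambda\bsnr))/(1+1/(\lambda\dsnr)) = \snr/(\Psi_1\Psi_2)$.

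Iterating the variance recursion produces the equivalent final SNR $\snr_N \dfn 1/\sigma_N^2 = \snr(1+\snr_e)^{N-1}$. On $\mathcal{A}$ the decoder faces a $2^{NR}$-ary PAM through AWGN of SNR $\snr_N$, so applying \eqref{eq:PeUncodedPAM} with budget $\Pe/2$ yields the achievability condition $2^{2NR}\le 1+\snr_N/\Gamma_0(\Pe/2)$; combined with the aliasing budget, the total error is at most $\Pe$.

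Finally, to translate into the capacity gap I let $D \dfn \snr/\bigl[\Gamma_0^{1/N}(\Pe/2)\,(\Psi_1\Psi_2)^{(N-1)/N}\bigr]$, which makes $\Psi_3 = (10/\ln 10)/(D-1)$. The monomial bound $(1+\snr_e)^{N-1}\ge \snr_e^{N-1}$ gives $2^{2R}\ge D$, hence
\begin{equation*}
\Gamma = \frac{\snr}{2^{2R}-1}\le \frac{\snr}{D-1} = \frac{\snr}{D}\Bigl(1+\frac{1}{D-1}\Bigr).
\end{equation*}
Recognizing $10\log_{10}(\snr/D) = \tfrac{1}{N}\Gamma_{0,\db}(\Pe/2)+\tfrac{N-1}{N}(\Psi_{1,\db}+\Psi_{2,\db})$ and invoking $\ln(1+t)\le t$ to bound the residual dB factor by $\Psi_3$ gives \eqref{eq:capgapourshceme}. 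The main obstacle I anticipate is the conditional-distribution bookkeeping in the aliasing step: establishing that $\eps_n$ remains Gaussian and independent of $\Theta$ under the event of no prior aliasing, which is precisely what the common dither $V_n$ was introduced to guarantee—without it, the aliasing bound would depend on the unknown $\Theta$ through the modulo operation.
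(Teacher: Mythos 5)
Your overall architecture --- splitting the error into per-round modulo-aliasing events plus a terminal PAM decoding event, choosing $\gamma_n$ to equalize the aliasing probabilities at $\Pe/(2N)$, setting $\alpha^2=P/(\lambda\wt{P})$, deriving the effective per-round SNR $\snr/(\Psi_1\Psi_2)$ and the final $\snr_N$, and then running the $D$-algebra with $\ln(1+t)\le t$ --- coincides with the paper's proof, and your closing algebra is correct. The gap is in the one probabilistic step you yourself flagged. You assert that ``an induction shows that conditional on no prior aliasing and on $\Theta$, $\eps_n\sim\mathcal{N}(0,\sigma_n^2)$ is independent of $\wt{Z}_n$,'' and you credit the dither $V_n$ for this. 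That claim is false: conditioning on $\bigcap_{i<n}\{\gamma_i\eps_i+\wt{Z}_i\in[-\tfrac{d}{2},\tfrac{d}{2})\}$ truncates the joint Gaussian law of $(\eps_1,\wt{Z}_1,\ldots)$, and since $\eps_n$ is built from these variables, its conditional law is a conditioned (non-Gaussian) distribution, not $\mathcal{N}(0,\sigma_n^2)$. The dither only guarantees that $\wt{X}_n$ is uniform (hence the feedback power constraint holds) and that the aliasing event can be written intrinsically as $\{\gamma_n\eps_n+\wt{Z}_n\notin[-\tfrac{d}{2},\tfrac{d}{2})\}$ independently of $\Theta$; it does nothing to restore conditional Gaussianity.

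The paper's fix is a coupling argument that sidesteps conditioning entirely: define a coupled system driven by the same message and the same noise sample paths but with all modulo operations removed (so everything there is jointly Gaussian, at the price of violating Terminal B's power constraint), observe that the two systems are sample-path identical until the first aliasing error, so that for each $n$ one has $\Pr\left(\bigcap_{i=1}^{n-1}E_i^C\cap E_n\right)=\wt{\Pr}\left(\bigcap_{i=1}^{n-1}E_i^C\cap E_n\right)$, hence $\Pr\left(\bigcup_n E_n\right)=\wt{\Pr}\left(\bigcup_n E_n\right)\le\sum_n\wt{\Pr}(E_n)$, where each summand is now an \emph{unconditional} Gaussian tail ($2Q(\sqrt{3/\lambda})$ for the aliasing rounds and $2Q(\sqrt{3\snr_N/(2^{2NR}-1)})$ for $E_N$). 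Your numerical conclusions survive because the inequality you actually need is $\Pr\left(E_n\cap\bigcap_{i<n}E_i^C\right)\le\wt{\Pr}(E_n)$, which is exactly what the coupling delivers --- but your stated justification does not establish it. Replace the conditional-Gaussianity induction with this intersection/coupling step and the proof closes.
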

We prove this theorem is Section \ref{sec:proof}.

\begin{remark}
$\lambda$ is a factor that encapsulates the cost of controlling the modulo-aliasing error, as seen below. It decreases with a decreasing $\Pe$.  
\end{remark}
\begin{remark}\label{rem:concat}
$\Psi_1$ is a penalty term roughly corresponding to the decrease in $\snr$ incurred by trivially concatenating the forward and feedback channels. To see this, consider the  \textit{concatenated channel} from $X_n$ to $\wt{Y}_n$ where Terminal B performs simple linear scaling to meet the power constraint $\wt{P}$, i.e. $\wt{X}_n=\sqrt{\frac{\wt{P}}{P+\sigma_2}} Y_n$. The $\snr$ of this channel is 
\begin{align}
\snr_\textrm{concatenated}\dfn\frac{\snr\cdot\bsnr}{\snr+\bsnr+1},
\end{align}
hence, the associated $\snr$ loss w.r.t. the feedforward channel is 
\begin{align}
\frac{\snr}{\snr_\textrm{concatenated}} = 1+\frac{1}{\dsnr}+\frac{1}{\bsnr}\approx 1+\frac{1}{\dsnr}.
\end{align}
This latter expression is very similar to $\Psi_1$, with the exception of the additional $\lambda$ factor. Hence, loosely speaking, $\Psi_1$ encapsulates the inherent loss due to essentially employing a feedback scheme over the concatenated channel, together with a feedback power reduction by the amount of $\lambda$ used to avoid modulo-aliasing errors. This loss vanishes for a fixed $\Pe$ as $\dsnr$ increases. However, if $\dsnr$ is fixed, this term does not vanish in the limit of high $\snr$. 
\end{remark}

\begin{remark}\label{rem:Psi2}
$\Psi_2$ can be interpreted as a penalty term stemming from the modulo-aliasing error endemic to the system, due to the presence of feedback noise in the modulo operations at Terminal A. For a fixed $\bsnr$, the minimal value of $\lambda$ supported by our scheme is given by $\bsnr^{-1}$, which in turn dictates the minimal error probability that can be attained. Due to this error floor, our scheme cannot achieve any rate in the usual sense. The loss of $\snr$ incurred by $\Psi_2$ vanishes for any fixed error probability $\Pe$ as $\bsnr$ increases. 
\end{remark}

\begin{remark}
 $\Psi_3$ is an additional penalty term (already in logarithmic scale), that result from the fact that we consider the capacity gap in terms of $\snr$ ratios, whereas the explicit term arising from the capacity formula is related to $\log{(1+\snr)}$ rather than $\log{(\snr)}$. Note that $\Psi_3 = \mathrm{O}\left(\snr^{-1}\right)$. 
\end{remark}

\begin{corollary}[High $\snr$ behavior]
Let $\dsnr$ and $\Pe$ be fixed. The capacity gap attained by our scheme for $\snr$ large enough, can be approximated by
\begin{align}
\label{eq:capgapapprox}
\Gamma^*_{\db}(\Pe,N) \approx \tfrac{1}{N}\Gamma_{0,\db}(\tfrac{\Pe}{2}) + 
\tfrac{N-1}{N}\left[1+\frac{1}{\lambda\dsnr}\right]_\db.
\end{align}
The first term is roughly the capacity gap of the S-K scheme with noiseless feedback. The second term pertains to the $\snr$ loss w.r.t. a concatenated channel as well as modulo-aliasing errors, as discussed in Remark \ref{rem:concat}.
\end{corollary}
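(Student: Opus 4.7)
The plan is to derive the corollary directly from Theorem \ref{thrm:capgap} by showing that, under the standing assumption that $\dsnr$ and $\Pe$ are held fixed while $\snr\to\infty$, two of the four terms in \eqref{eq:capgapourshceme} vanish and the remaining two match the approximation \eqref{eq:capgapapprox}. No new probabilistic analysis is required; the argument is purely an asymptotic evaluation of the closed-form quantities defined in \eqref{eq:penalty}.

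First I would note that $\lambda=3[Q^{-1}(\Pe/(4N))]^{-2}$ depends only on $\Pe$ and $N$, hence is a fixed positive constant in this regime. Since $\bsnr=\dsnr\cdot\snr$, letting $\snr\to\infty$ with $\dsnr$ fixed forces $\bsnr\to\infty$ as well. Consequently $\Psi_1=1+(\lambda\dsnr)^{-1}$ is a fixed constant (this is exactly the term surviving as the ``concatenated channel plus modulo-aliasing'' penalty identified in Remark \ref{rem:concat}), and
\begin{align}
\Psi_2=\frac{1}{1-(\lambda\bsnr)^{-1}}\;\longrightarrow\;1,
\end{align}
so $\Psi_{2,\db}\to 0$. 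In particular $\Psi_2>0$ and $\lambda\bsnr>1$ for all $\snr$ large enough, so the bound \eqref{eq:capgapourshceme} is meaningful.

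Next I would dispose of $\Psi_3$. Writing $A(\snr)\dfn \Psi_1^{-(N-1)/N}\Psi_2^{-(N-1)/N}\Gamma_0^{-1/N}(\Pe/2)$, the factor $A(\snr)$ tends to the strictly positive constant $A_\infty\dfn \Psi_1^{-(N-1)/N}\Gamma_0^{-1/N}(\Pe/2)$ as $\snr\to\infty$. Hence the denominator $\snr\cdot A(\snr)-1$ grows linearly in $\snr$, so
\begin{align}
\Psi_3 \;=\; \frac{10/\ln 10}{\snr\cdot A(\snr)-1} \;=\; \mathrm{O}(\snr^{-1}) \;\longrightarrow\; 0.
\end{align}

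Finally, substituting these limits into the right-hand side of \eqref{eq:capgapourshceme} collapses the bound to $\tfrac{1}{N}\Gamma_{0,\db}(\tfrac{\Pe}{2})+\tfrac{N-1}{N}\Psi_{1,\db}$, which is exactly \eqref{eq:capgapapprox} with $\Psi_{1,\db}=[1+(\lambda\dsnr)^{-1}]_\db$. The only subtlety, and therefore the ``main obstacle'' such as it is, is to be careful that the approximation holds with a sign and not merely in absolute value: since $\Psi_{2,\db}\downarrow 0$ from above and $\Psi_3\downarrow 0$ from above, both vanishing terms contribute nonnegatively to the bound, so the stated approximation is in fact an asymptotically tight upper estimate rather than an uncontrolled approximation.
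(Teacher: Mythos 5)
Your proposal is correct and follows exactly the route the paper intends: the corollary is obtained from Theorem \ref{thrm:capgap} by fixing $\dsnr$ and $\Pe$, noting $\bsnr=\dsnr\cdot\snr\to\infty$, so that $\Psi_{2,\db}\to 0$ and $\Psi_3=\mathrm{O}(\snr^{-1})\to 0$, leaving the first term plus $\tfrac{N-1}{N}\Psi_{1,\db}$. The paper leaves this derivation implicit in its remarks on $\Psi_2$ and $\Psi_3$, and your explicit asymptotic evaluation (including the observation that the discarded terms are nonnegative, so the approximation is a tight upper estimate) matches it.
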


\begin{remark}
Note that there is a ``low $\snr$'' regime (related also to the target error probability or to $\dsnr$), where the loss terms $\Psi_{1,\db}+\Psi_{2,\db}$ are larger than say $\Gamma_{0,\db}(\Pe)$. In that case, setting $N=1$, namely using an uncoded system with no interaction, is the optimal choice of parameters for our scheme\footnote{The reason we get the looser $\Gamma_{0,\db}\left(\frac{\Pe}{2}\right)$ term in \eqref{eq:capgapapprox} is for brevity of exposition; a more accurate trade-off is given in the next section.}. As we shall see however, for many practical values of $\snr,\bsnr$ and $\Pe$, interaction results in significant gains.     
\end{remark}

\section{Proof of Main Result}\label{sec:proof}
In Subsection \ref{s-kbasic} we analyzed the error probability of S-K with noiseless feedback, relying on the fact that all the noises are jointly Gaussian, including the noise $\varepsilon_N$ experienced by the PAM decoder. To that end, we were able to directly use the error probability analysis of simple PAM over AWGN discussed in Subsection \ref{pambasic}. 

In the noisy feedback case however, the non-linearity induced by modulo operations at both terminals induce a non-Gaussian distribution of $\varepsilon_N$. An analysis of the decoding error based on the actual distribution of $\varepsilon_N$ is very 
involved. Yet, an upper bound can be derived via a simple coupling argument described below. 

Recall that Terminal A computes $\wt{\eps}_n$, a noisy scaled version of the estimation error of Terminal B, via a modulo operation \eqref{eq:noisy_estimate}. For any $n\in\{1,\ldots, N-1\}$ we define $E_n$ as the event where this computation results in a modulo-aliasing error, i.e.,
\begin{equation}
\label{eq:moderror}
E_n\dfn\{ \gamma_n\varepsilon_n+\wt{Z}_n\notin[-\tfrac{d}{2},\tfrac{d}{2})\}.
\end{equation}
Furthermore, we define $E_N$ as the PAM decoding error event:
\begin{equation}
E_N=\{ \varepsilon_N\notin[-\tfrac{d_{min}}{2},\tfrac{d_{min}}{2})\},
\end{equation}
where $d_{min}$ is the PAM constellation minimal distance. As mentioned above, the distribution of  $\varepsilon_N$ is not Gaussian due to the nonlinearity introduced by the modulo operations. To circumvent this, we consider the following upper bound for the error probability: 
\begin{align}\label{eq:err_union}
\Pe < \Pr\left(\bigcup_{n=1}^NE_n \right).
\end{align}
The inequality stems from the fact that a modulo-aliasing error does not necessarily cause a PAM decoding error.

To proceed, we define the \textit{coupled system} as a system that is fed by the same message and experiences the (sample-path) exact same noises, with the only difference being that no modulo operations are implemented at neither of the terminals. Clearly, the coupled system violates the power constraint at Terminal B. However, given the message $W$, all the random variables in the coupled system are jointly Gaussian, and in particular, the estimation errors $\varepsilon_n$ in that system are Gaussian for $n=1,\ldots,N$. Moreover, it is easy to see that the estimation errors are \textit{sample-path identical} between the original system and the coupled system until the first modulo-aliasing error occurs. To be precise: 
\begin{lemma}
Let  $\wt{\Pr}$ denote the probability operator in for the coupled process. Then for any $N>1$:
\begin{equation}
\Pr\left(\bigcup_{n=1}^N E_n \right) =  \wt{\Pr}\left(\bigcup_{n=1}^N E_n \right).
\end{equation}
 \end{lemma}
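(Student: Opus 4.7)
The plan is to build an explicit coupling in which the original system and the coupled (no-modulo) system share the same realizations of the message $\Theta$, the forward and feedback noise sequences $\{Z_n\},\{\wt{Z}_n\}$, and the common dither $\{V_n\}$, all defined on a single probability space. Both probability operators $\Pr$ and $\wt{\Pr}$ in the statement are then just two push-forwards of this joint law, which lets one compare the two systems sample path by sample path.

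The core claim I would prove by induction on $n$ is the following sample-path statement: on the event $\bigcap_{k=1}^{n-1}E_k^c$, all of the internal variables $\wh{\Theta}_k$, $\varepsilon_k$, $X_k$, $Y_k$, $\wh{\varepsilon}_k$ for $k\le n$ coincide in the two systems. The base case $n=1$ is immediate since both systems execute $\wh{\Theta}_1=Y_1/\sqrt{P}$ without any modulo. For the inductive step, assume the variables agree up to time $n$ on $\bigcap_{k=1}^{n-1}E_k^c$. In the original system Terminal B transmits $\wt{X}_n=\moduloOp{\gamma_n\wh{\Theta}_n+V_n}$ and Terminal A forms $\wt{\varepsilon}_n=\moduloOp{\wt{Y}_n-\gamma_n\Theta-V_n}$, which by modulo property (ii) (taking $x=\gamma_n\Theta+V_n$, $d_1=\gamma_n\varepsilon_n$, $d_2=\wt{Z}_n$) equals $\gamma_n\varepsilon_n+\wt{Z}_n$ precisely when $E_n^c$ holds. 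In the coupled system the same computation is done without modulo and trivially yields $\gamma_n\varepsilon_n+\wt{Z}_n$. Hence on $\bigcap_{k=1}^{n}E_k^c$ the quantity $\wt{\varepsilon}_n$ agrees in both systems, and consequently so do $X_{n+1}=\alpha\wt{\varepsilon}_n$, $Y_{n+1}$, $\wh{\varepsilon}_n=\beta_{n+1}Y_{n+1}$, and $\wh{\Theta}_{n+1}$, closing the induction.

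To conclude, I would introduce the stopping time $T\dfn\min\{n:E_n\text{ occurs}\}$ (with $T=\infty$ if no such $n$ exists), so that $\bigcup_{n=1}^N E_n=\{T\le N\}$. The event $\{T=n\}$ is determined by the internal variables through time $n$, and on $\{T\ge n\}=\bigcap_{k=1}^{n-1}E_k^c$ these variables agree between the two systems by the induction above. Therefore the indicator of $\{T=n\}$ is the same random variable under either system, so $\Pr(T=n)=\wt{\Pr}(T=n)$ for every $n\in\{1,\ldots,N\}$; summing gives the lemma.

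The only delicate point is the modulo algebra in the inductive step: one has to check that the composition of the modulo at Terminal B and the modulo at Terminal A collapses, via property (ii), to the linear no-modulo expression exactly on $E_n^c$, and that Terminal A's modulo argument is indeed of the form $\moduloOp{\moduloOp{x+d_1}+d_2-x}$ with $d_1+d_2=\gamma_n\varepsilon_n+\wt{Z}_n$. Everything else is standard coupling/stopping-time bookkeeping.
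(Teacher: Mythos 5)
Your proposal is correct and follows essentially the same route as the paper: the paper also decomposes $\bigcup_{n=1}^N E_n$ into the disjoint ``first aliasing error at time $n$'' events $\bigcap_{i=1}^{n-1}E_i^C\cap E_n$ (your $\{T=n\}$) and equates their probabilities under $\Pr$ and $\wt{\Pr}$ via sample-path identity of the two systems up to the first modulo-aliasing error. The only difference is that you spell out, via induction and modulo property (ii), the sample-path agreement that the paper asserts as ``easy to see''; this is a welcome but not substantively different elaboration.
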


\begin{proof}
\begin{equation}
\Pr\left(\bigcup_{n=1}^N E_n \right) =  \Pr(E_1) +  \sum_{n=2}^{N}\Pr\left(\bigcap_{i=1}^{n-1}\left(E_i^C\right)\bigcap E_n\right).
\end{equation}
Moreover, for any $i\in\{2,\ldots,N\}$ 
\begin{equation}
\Pr\left(\bigcap_{i=1}^{n-1}\left(E_i^C\right)\bigcap E_n\right) = 
\wt{\Pr}\left(\bigcap_{i=1}^{n-1}\left(E_i^C\right)\bigcap E_n\right)
\end{equation}
and trivially $\Pr(E_1) = \wt{\Pr}(E_1)$. 
\end{proof}
Combining the above with \eqref{eq:err_union} and applying the union bound in the coupled system, we obtain
\begin{align}\label{eq:union_bound}
\Pe\leq\sum_{n=1}^{N}\wt{\Pr}\left( E_n\right).
\end{align}
Calculating the above probabilities now involves only scalar Gaussian densities, which significantly simplifies the analysis.

\subsection{Calculation of the Parameters}
We set $\gamma_n$ such that $\wt{\Pr}(E_1) =\cdots = \wt{\Pr}(E_{n-1}) \dfn p_m$, for some $p_m$ small enough to be set later. Specifically, recalling the definition \eqref{eq:moderror} and that $d=\sqrt{12\wt{P}}$, and since $\wt{\eps}_n = \gamma_n\varepsilon_n+\wt{Z}_n$ in the coupled system is Gaussian, we obtain the following equation for $\gamma_n$: 
\begin{equation}
\label{eq:pm}
p_m=2Q\left(\sqrt{\frac{3\wt{P}}{\wt{\Expt}\wt{\eps}_n^2}}\right),
\end{equation}
and hence 
\begin{equation}
\label{eq:gamma_n}
\gamma_n = \sqrt{\frac{\lambda\wt{P}-\wt{\sigma}^2}{\sigma^2_n}},
\end{equation}
where $\lambda$ is defined 
\begin{equation}
\label{eq:frack}
\lambda\dfn 3\left[Q^{-1}\left(\frac{p_m}{2}\right)\right]^{-2}
\end{equation}
\begin{remark}
$\lambda$ defined in \eqref{eq:penalty} is a special case of the above with $p_m=\frac{\Pe}{2N}$. Note again that $\lambda > \bsnr^{-1}$ must hold, which lower bounds the attainable error probability, see Remark \ref{rem:Psi2}.
\end{remark}

$\alpha$ is set so that the input power constraint at Terminal A is met. namely  $P\geq\wt{\Expt} X_n^2=\alpha^2\wt{\Expt}\wt{\varepsilon}_n^2$.
From \eqref{eq:pm} it stems that:
\begin{equation}
\alpha=\sqrt{\frac{P}{\lambda\wt{P}}}.
\end{equation}
\begin{remark}
It should be emphasized that this calculation is accurate for the coupled system only. In the original system, a modulo-aliasing error may cause the power constraint to be violated. However, since $\wt{\varepsilon}_n^2\leq 3\wt{P}$ and since the probabilities of modulo-aliasing errors are set to be very low (lower then the target error probability) the power constraint violation is negligible, and can be practically ignored; e.g., for $\Pe=10^{-6}$ and $N=20$, the increase in average power due to this effect is lower than $10^{-4} \db$. We also note in passing that the power constraint in Terminal B is always satisfied (regardless of parameter choice), due to dithering. 
\end{remark}

The parameter $\beta_n$ determines the evolution of the estimation error. 
The linear estimate of $\varepsilon_n$: $\wh{\varepsilon}_n=\beta_{n+1}Y_{n+1}$, is the optimal estimate in the coupled system, in which $\varepsilon_n$ and $Y_{n+1}$ are jointly Gaussian. We would thus like to minimize $\wt{\Expt}\left(\eps_n-\wh{\eps}_n\right)^2$. Plugging in $Y_{n+1}=\alpha(\gamma_n\varepsilon_n+\wt{Z}_n)+Z_{n+1}$ and solving the optimization
problem yields:
\begin{equation} 
\beta_{n+1}=\frac{\sigma_n}{\sigma}\frac{\sqrt{\snr\cdot\left( 1-\frac{1}{\lambda\bsnr}\right)}{ }}{1+\snr},
\end{equation}
where $\sigma_n^2$ is the variance of $\eps_n$ in the coupled system. Recalling that $\eps_{n+1} = \eps_n-\wh{\eps}_n$ and computing the MMSE for the optimal choice of $\beta_{n+1}$ above, we obtain a recursive formula for $\sigma_n^2$, which boils down to the following expression for the $\snr$ after $N$ iterations: 
\begin{equation}
\label{eq:snreqmod}
\snr_N\dfn \frac{1}{\sigma^2_N}=
\snr\cdot\left(1+\snr\cdot\frac{1-\frac{1}{\lambda\bsnr}}{1+\frac{1}{\lambda\dsnr}} \right)^{N-1},
\end{equation}
and using \eqref{eq:union_bound}, the error probability is bounded by
\begin{equation}
\label{eq:Pt}
\Pe < (N-1)p_m+2Q\left(\sqrt{\frac{3\snr_N}{2^{2NR}-1}} \right).
\end{equation}
Juxtaposing \eqref{eq:snreqmod} and \eqref{eq:snreq} shows that in the noisy feedback case, the exponential growth of the $\snr$ with the number rounds is dampened by a factor that is inversely related to $\snr$ and $\dsnr$, and also related to the term $\lambda$ that is in turn determined by the modulo-aliasing error probability. This factor corresponds to $\Psi_1,\Psi_2$ in Theorem \ref{thrm:capgap}, where $\Psi_3$ is a remainder term obtained by pedestrian manipulations and the inequality  $-\ln(1-x)\leq \frac{x}{x-1}$ for $x<1$. The result in Theorem \ref{thrm:capgap} was obtained for the specific choice $p_m = \frac{\Pe}{2N}$ of the modulo-aliasing error. In general, reducing $p_m$ decreases $\lambda$ which in turn decreases $\snr_N$, and hence increases the second addend on the right-hand-side of \eqref{eq:Pt}, resulting in a trade-off that could potentially be further optimized. 

\section{Numerical Results}
\label{results}
\begin{figure}
%
%
%
%
\begin{tikzpicture}

\begin{axis}[
xlabel={N interaction rounds},
ylabel={Capacity gap [dB]},
xmin=1, xmax=36,
ymin=0, ymax=9,
axis on top,
xmajorgrids,
ymajorgrids
]
\addplot [black, dashed]
coordinates {
(1,9.01787449938529)
(2,5.107421875)
(3,3.6474609375)
(4,2.83203125)
(5,2.314453125)
(6,1.9580078125)
(7,1.69921875)
(8,1.4990234375)
(9,1.3427734375)
(10,1.2158203125)
(11,1.11328125)
(12,1.025390625)
(13,0.947265625)
(14,0.8837890625)
(15,0.830078125)
(16,0.78125)
(17,0.732421875)
(18,0.693359375)
(19,0.6591796875)
(20,0.625)
(21,0.595703125)
(22,0.576171875)
(23,0.546875)
(24,0.52734375)
(25,0.5078125)
(26,0.48828125)
(27,0.46875)
(28,0.4541015625)
(29,0.439453125)
(30,0.4248046875)
(31,0.41015625)
(32,0.400390625)
(33,0.3857421875)
(34,0.3759765625)
(35,0.3662109375)
(36,0.3515625)

};
\addplot [black]
coordinates {
(1,9.01787449938529)
(2,7.6708984375)
(3,7.255859375)
(4,7.0263671875)
(5,6.89453125)
(6,6.81640625)
(7,6.7626953125)
(8,6.728515625)
(9,6.7041015625)
(10,6.689453125)
(11,6.6796875)
(12,6.669921875)
(13,6.6650390625)
(14,6.66015625)
(15,6.66015625)
(16,6.66015625)
(17,6.66015625)
(18,6.66015625)
(19,6.66015625)
(20,6.66015625)
(21,6.6650390625)
(22,6.669921875)
(23,6.669921875)
(24,6.6748046875)
(25,6.6748046875)
(26,6.6796875)
(27,6.6796875)
(28,6.6845703125)
(29,6.689453125)
(30,6.689453125)
(31,6.6943359375)
(32,6.69921875)
(33,6.69921875)
(34,6.7041015625)
(35,6.708984375)
(36,6.708984375)

};
\addplot [black, mark=*, mark size=3, only marks]
coordinates {
(6,6.81640625)

};
\addplot [black]
coordinates {
(1,9.01787449938529)
(2,6.494140625)
(3,5.6201171875)
(4,5.1513671875)
(5,4.86328125)
(6,4.677734375)
(7,4.541015625)
(8,4.4482421875)
(9,4.375)
(10,4.31640625)
(11,4.2724609375)
(12,4.23828125)
(13,4.208984375)
(14,4.1796875)
(15,4.16015625)
(16,4.140625)
(17,4.130859375)
(18,4.1162109375)
(19,4.1015625)
(20,4.0966796875)
(21,4.0869140625)
(22,4.08203125)
(23,4.072265625)
(24,4.0673828125)
(25,4.0625)
(26,4.0625)
(27,4.0576171875)
(28,4.052734375)
(29,4.052734375)
(30,4.0478515625)
(31,4.04296875)
(32,4.04296875)
(33,4.04296875)
(34,4.04296875)
(35,4.04296875)
(36,4.04296875)

};
\addplot [black, mark=*, mark size=3, only marks]
coordinates {
(12,4.23828125)

};
\addplot [black]
coordinates {
(1,9.01787449938529)
(2,5.29296875)
(3,3.916015625)
(4,3.1494140625)
(5,2.666015625)
(6,2.3388671875)
(7,2.099609375)
(8,1.9189453125)
(9,1.77734375)
(10,1.66015625)
(11,1.5625)
(12,1.484375)
(13,1.416015625)
(14,1.3623046875)
(15,1.30859375)
(16,1.26953125)
(17,1.23046875)
(18,1.19140625)
(19,1.162109375)
(20,1.1328125)
(21,1.11328125)
(22,1.0888671875)
(23,1.064453125)
(24,1.0498046875)
(25,1.03515625)
(26,1.015625)
(27,1.0009765625)
(28,0.986328125)
(29,0.9765625)
(30,0.9619140625)
(31,0.947265625)
(32,0.9375)
(33,0.927734375)
(34,0.91796875)
(35,0.9130859375)
(36,0.908203125)

};
\addplot [black, mark=*, mark size=3, only marks]
coordinates {
(22,1.0888671875)

};
\addplot [black]
coordinates {
(1,9.01787449938529)
(2,5.126953125)
(3,3.6767578125)
(4,2.8662109375)
(5,2.353515625)
(6,2.001953125)
(7,1.73828125)
(8,1.54296875)
(9,1.38671875)
(10,1.259765625)
(11,1.162109375)
(12,1.07421875)
(13,0.99609375)
(14,0.9375)
(15,0.87890625)
(16,0.830078125)
(17,0.7861328125)
(18,0.7470703125)
(19,0.712890625)
(20,0.68359375)
(21,0.654296875)
(22,0.625)
(23,0.60546875)
(24,0.5810546875)
(25,0.5615234375)
(26,0.546875)
(27,0.52734375)
(28,0.5078125)
(29,0.498046875)
(30,0.478515625)
(31,0.46875)
(32,0.458984375)
(33,0.4443359375)
(34,0.4296875)
(35,0.419921875)
(36,0.41015625)

};
\addplot [black, mark=*, mark size=3, only marks]
coordinates {
(23,0.60546875)

};
\path [draw=black, fill opacity=0] (axis cs:13,9)--(axis cs:13,9);

\path [draw=black, fill opacity=0] (axis cs:36,13)--(axis cs:36,13);

\path [draw=black, fill opacity=0] (axis cs:13,0)--(axis cs:13,0);

\path [draw=black, fill opacity=0] (axis cs:1,13)--(axis cs:1,13);

\node at (axis cs:10,0.2)[
  scale=0.6,
  anchor=base west,
  text=black,
  rotate=0.0
]{\itshape clean feedback};
\node at (axis cs:28,6.769453125)[
  scale=0.6,
  anchor=base west,
  text=black,
  rotate=0.0
]{ $\Delta\mathrm{SNR}=6dB$};
\node at (axis cs:6.3,6.8426953125)[
  scale=0.6,
  anchor=base west,
  text=black,
  rotate=0.0
]{ $n_{opt}=6$};
\node at (axis cs:28,4.132734375)[
  scale=0.6,
  anchor=base west,
  text=black,
  rotate=0.0
]{ $\Delta\mathrm{SNR}=10dB$};
\node at (axis cs:12.3,4.288984375)[
  scale=0.6,
  anchor=base west,
  text=black,
  rotate=0.0
]{ $n_{opt}=12$};
\node at (axis cs:28,1.0565625)[
  scale=0.6,
  anchor=base west,
  text=black,
  rotate=0.0
]{ $\Delta\mathrm{SNR}=20dB$};
\node at (axis cs:22.3,1.144453125)[
  scale=0.6,
  anchor=base west,
  text=black,
  rotate=0.0
]{ $n_{opt}=22$};
\node at (axis cs:28,0.578046875)[
  scale=0.6,
  anchor=base west,
  text=black,
  rotate=0.0
]{ $\Delta\mathrm{SNR}=30dB$};
\node at (axis cs:23.3,0.6610546875)[
  scale=0.6,
  anchor=base west,
  text=black,
  rotate=0.0
]{ $n_{opt}=23$};
\end{axis}

\end{tikzpicture}
\caption{\label{fig:resultsFigR1}
The capacity gap as function of the iterations and $\dsnr$ for a target rate $R=1$ (low $\snr$), 
and target error probability $p_t=10^{-6}$}
\end{figure}
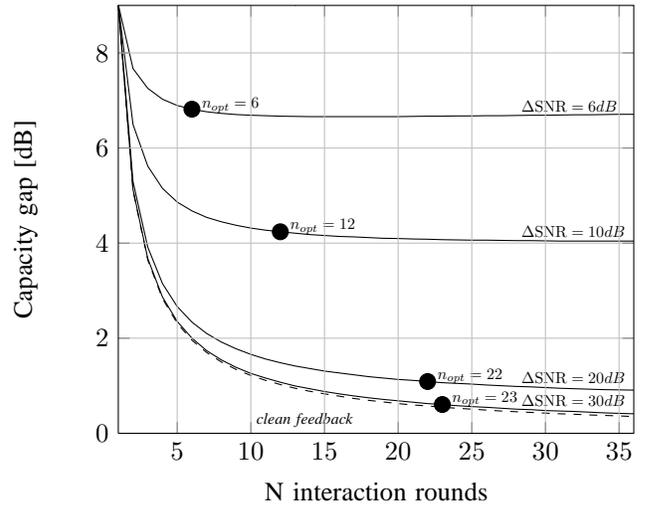

\begin{figure}
%
%
%
%
\begin{tikzpicture}

\begin{axis}[
xlabel={N interaction rounds},
ylabel={Capacity gap [dB]},
xmin=1, xmax=36,
ymin=0, ymax=9,
axis on top,
xmajorgrids,
ymajorgrids
]
\addplot [black, dashed]
coordinates {
(1,9.01787449938529)
(2,4.228515625)
(3,2.822265625)
(4,2.119140625)
(5,1.6943359375)
(6,1.416015625)
(7,1.2109375)
(8,1.0595703125)
(9,0.9423828125)
(10,0.849609375)
(11,0.771484375)
(12,0.7080078125)
(13,0.654296875)
(14,0.60546875)
(15,0.56640625)
(16,0.5322265625)
(17,0.5029296875)
(18,0.4736328125)
(19,0.44921875)
(20,0.4248046875)
(21,0.4052734375)
(22,0.390625)
(23,0.37109375)
(24,0.3564453125)
(25,0.341796875)
(26,0.3271484375)
(27,0.3173828125)
(28,0.302734375)
(29,0.29296875)
(30,0.283203125)
(31,0.2734375)
(32,0.2685546875)
(33,0.2587890625)
(34,0.25390625)
(35,0.244140625)
(36,0.2392578125)

};
\addplot [black]
coordinates {
(1,9.01787449938529)
(2,7.8173828125)
(3,7.734375)
(4,7.7197265625)
(5,7.7294921875)
(6,7.744140625)
(7,7.763671875)
(8,7.783203125)
(9,7.802734375)
(10,7.8173828125)
(11,7.83203125)
(12,7.8515625)
(13,7.861328125)
(14,7.8759765625)
(15,7.890625)
(16,7.9052734375)
(17,7.9150390625)
(18,7.9296875)
(19,7.939453125)
(20,7.94921875)
(21,7.958984375)
(22,7.96875)
(23,7.978515625)
(24,7.9833984375)
(25,7.9931640625)
(26,8.0029296875)
(27,8.0078125)
(28,8.017578125)
(29,8.02734375)
(30,8.0322265625)
(31,8.037109375)
(32,8.046875)
(33,8.0517578125)
(34,8.056640625)
(35,8.06640625)
(36,8.0712890625)

};
\addplot [black, mark=*, mark size=3, only marks]
coordinates {
(4,7.7197265625)

};
\addplot [black]
coordinates {
(1,9.01787449938529)
(2,6.69921875)
(3,6.2158203125)
(4,6.005859375)
(5,5.888671875)
(6,5.8203125)
(7,5.78125)
(8,5.751953125)
(9,5.732421875)
(10,5.72265625)
(11,5.712890625)
(12,5.7080078125)
(13,5.703125)
(14,5.703125)
(15,5.703125)
(16,5.703125)
(17,5.7080078125)
(18,5.7080078125)
(19,5.712890625)
(20,5.712890625)
(21,5.712890625)
(22,5.7177734375)
(23,5.72265625)
(24,5.72265625)
(25,5.7275390625)
(26,5.732421875)
(27,5.732421875)
(28,5.7373046875)
(29,5.7421875)
(30,5.7421875)
(31,5.7470703125)
(32,5.751953125)
(33,5.751953125)
(34,5.7568359375)
(35,5.76171875)
(36,5.76171875)

};
\addplot [black, mark=*, mark size=3, only marks]
coordinates {
(5,5.888671875)

};
\addplot [black]
coordinates {
(1,9.01787449938529)
(2,5.556640625)
(3,4.6630859375)
(4,4.2333984375)
(5,3.984375)
(6,3.828125)
(7,3.7158203125)
(8,3.6376953125)
(9,3.5791015625)
(10,3.53515625)
(11,3.49609375)
(12,3.466796875)
(13,3.4423828125)
(14,3.427734375)
(15,3.408203125)
(16,3.3935546875)
(17,3.3837890625)
(18,3.3740234375)
(19,3.3642578125)
(20,3.359375)
(21,3.3544921875)
(22,3.349609375)
(23,3.3447265625)
(24,3.33984375)
(25,3.3349609375)
(26,3.3349609375)
(27,3.330078125)
(28,3.330078125)
(29,3.330078125)
(30,3.3251953125)
(31,3.3251953125)
(32,3.3251953125)
(33,3.3203125)
(34,3.3203125)
(35,3.3203125)
(36,3.3203125)

};
\addplot [black, mark=*, mark size=3, only marks]
coordinates {
(11,3.49609375)

};
\addplot [black]
coordinates {
(1,9.01787449938529)
(2,4.404296875)
(3,3.06640625)
(4,2.40234375)
(5,2.0068359375)
(6,1.73828125)
(7,1.552734375)
(8,1.4111328125)
(9,1.3037109375)
(10,1.2158203125)
(11,1.1474609375)
(12,1.0888671875)
(13,1.0400390625)
(14,0.99609375)
(15,0.95703125)
(16,0.927734375)
(17,0.8984375)
(18,0.8740234375)
(19,0.8544921875)
(20,0.830078125)
(21,0.8154296875)
(22,0.80078125)
(23,0.78125)
(24,0.771484375)
(25,0.7568359375)
(26,0.7470703125)
(27,0.7373046875)
(28,0.7275390625)
(29,0.7177734375)
(30,0.7080078125)
(31,0.703125)
(32,0.693359375)
(33,0.6884765625)
(34,0.68359375)
(35,0.673828125)
(36,0.6689453125)

};
\addplot [black, mark=*, mark size=3, only marks]
coordinates {
(19,0.8544921875)

};
\addplot [black]
coordinates {
(1,9.01787449938529)
(2,4.248046875)
(3,2.8466796875)
(4,2.1484375)
(5,1.728515625)
(6,1.4453125)
(7,1.25)
(8,1.0986328125)
(9,0.9814453125)
(10,0.888671875)
(11,0.810546875)
(12,0.7470703125)
(13,0.693359375)
(14,0.6494140625)
(15,0.60546875)
(16,0.576171875)
(17,0.5419921875)
(18,0.517578125)
(19,0.48828125)
(20,0.46875)
(21,0.44921875)
(22,0.4296875)
(23,0.4150390625)
(24,0.400390625)
(25,0.3857421875)
(26,0.37109375)
(27,0.361328125)
(28,0.3515625)
(29,0.341796875)
(30,0.33203125)
(31,0.322265625)
(32,0.3125)
(33,0.302734375)
(34,0.2978515625)
(35,0.29296875)
(36,0.283203125)

};
\addplot [black, mark=*, mark size=3, only marks]
coordinates {
(20,0.46875)

};
\path [draw=black, fill opacity=0] (axis cs:13,9)--(axis cs:13,9);

\path [draw=black, fill opacity=0] (axis cs:36,13)--(axis cs:36,13);

\path [draw=black, fill opacity=0] (axis cs:13,0)--(axis cs:13,0);

\path [draw=black, fill opacity=0] (axis cs:1,13)--(axis cs:1,13);

\node at (axis cs:10,0.2)[
  scale=0.6,
  anchor=base west,
  text=black,
  rotate=0.0
]{\itshape clean feedback};
\node at (axis cs:28,8.10734375)[
  scale=0.6,
  anchor=base west,
  text=black,
  rotate=0.0
]{ $\Delta\mathrm{SNR}=3dB$};
\node at (axis cs:4.3,7.8094921875)[
  scale=0.6,
  anchor=base west,
  text=black,
  rotate=0.0
]{ $n_{opt}=4$};
\node at (axis cs:28,5.8221875)[
  scale=0.6,
  anchor=base west,
  text=black,
  rotate=0.0
]{ $\Delta\mathrm{SNR}=6dB$};
\node at (axis cs:5.3,5.9003125)[
  scale=0.6,
  anchor=base west,
  text=black,
  rotate=0.0
]{ $n_{opt}=5$};
\node at (axis cs:28,3.410078125)[
  scale=0.6,
  anchor=base west,
  text=black,
  rotate=0.0
]{ $\Delta\mathrm{SNR}=10dB$};
\node at (axis cs:11.3,3.546796875)[
  scale=0.6,
  anchor=base west,
  text=black,
  rotate=0.0
]{ $n_{opt}=11$};
\node at (axis cs:28,0.7977734375)[
  scale=0.6,
  anchor=base west,
  text=black,
  rotate=0.0
]{ $\Delta\mathrm{SNR}=20dB$};
\node at (axis cs:19.3,0.910078125)[
  scale=0.6,
  anchor=base west,
  text=black,
  rotate=0.0
]{ $n_{opt}=19$};
\node at (axis cs:28,0.421796875)[
  scale=0.6,
  anchor=base west,
  text=black,
  rotate=0.0
]{ $\Delta\mathrm{SNR}=30dB$};
\node at (axis cs:20.3,0.52921875)[
  scale=0.6,
  anchor=base west,
  text=black,
  rotate=0.0
]{ $n_{opt}=20$};
\end{axis}

\end{tikzpicture}
\caption{\label{fig:resultsFigR4}
The capacity gap as function of the iterations and $\dsnr$ for a target rate $R\geq4$ (high $\snr$),
and target error probability $p_t=10^{-6}$}
\end{figure}

The behavior of the capacity gap for our scheme as a function of the number of interaction rounds and $\dsnr$ is depicted in Fig.~\ref{fig:resultsFigR1} and 
Fig.~\ref{fig:resultsFigR4}, for ``high $\snr$'' and ``low $\snr$'' setups. In both figures we plotted the capacity gap, for a target rate $R$ and a target
error probability $p_e=10^{-6}$, where the $\snr$ corresponding to $R$ was found by numeric search on \eqref{eq:targetRate}, and the capacity gap calculated by definition \eqref{eq:capGapDef}. We can see that the higher $\dsnr$, the smaller the capacity gap, where $\dsnr=30\db$ is close to noiseless feedback. The points
marked $n_{opt}$ are those for which the capacity gap is less than $0.2\db$ above the minimal value attained. In Fig.~\ref{fig:resultsFigR1}, $R=1$, and can see that $\dsnr=10\db$ reduces the capacity gap to 
$4.2\db$ in 12 iterations, and $\dsnr=20\db$ reduces the capacity gap to $1.1\db$ in 22 iterations. 
In Fig.~\ref{fig:resultsFigR4} $R=4$ and for $\dsnr=10\db$ the capacity gap to $3.5\db$ in 11 iterations, and 
$\dsnr=20\db$ reduces the capacity gap to $0.8\db$ in 19 iterations. Observing \eqref{eq:capgapapprox} we can see that
for high $\snr$ the result is only a function of $\dsnr$, thus does not depend on the target rate or the
base $\snr$. 

\section{Notes on Implementation}\label{sec:implem}
The scheme described in this paper is simple and practical, as opposed to its noiseless feedback counterparts. This provides impetus for further discussing implementation related aspects. The following conditions should be met for our results to carry merit: 1) \textit{Information asymmetry}: Terminal A has substantially more information to convey than Terminal B; 2) \textit{$\snr$ asymmetry}: The $\snr$ of the feedforward channel is lower than the $\snr$ of the feedback channel. This can happen due to differences in power constraints and/or path losses; 3) \textit{Complexity/delay constraints:} There are severe complexity or delay constraint at Terminal A; 4) \textit{Two-way signaling:} Our scheme assumes sample-wise feedback. The communication system should therefore be full duplex where both terminals have virtually the same signaling rate; hence, the terminals split the bandwidth between them even though only Terminal A is transmitting information. This situation can sometimes be inherent to the system, but should otherwise be tested against the (non-interactive) solution where the entire bandwidth is allocated to Terminal A. This choice of forward vs. feedback bandwidth allocation yields a system trade-off that is $\snr$ dependent: Terminal A can use our scheme and achieve a rate of $C(\snr_{dB}-\Gamma^*_{dB})$, or alternatively employ FEC over the full forward--feedback bandwidth, thereby doubling the forward signaling rate but also incurring a $3\db$ loss in $\snr$ and a potentially larger capacity gap $\Gamma_\db^{\textrm{FEC}}$, resulting in a rate of $2C(\snr_{dB}-3\db -\Gamma_\db^{\textrm{FEC}})$. It can therefore be seen that our solution is generally better for low enough $\snr$. For instance, for $\Pe=10^{-6}$ and $\dsnr > 30\db$ our scheme outperforms (with comparable complexity and delay) full bandwidth uncocded PAM for any $\snr< 23dB$, and outperforms (with significantly smaller complexity and delay) full bandwidth FEC with $\Gamma_\db^{\textrm{FEC}} = 3\db$ for any $\snr < 9\db$.  

The use of very large PAM constellations, whose size is exponential in the product of rate and interaction rounds, seemingly requires extremely low noise and distortion at the digital and analog circuits in Terminal A, which may appear to impose a major implementation obstacle. Fortunately, this is not the case. The full resolution implied by the constellation size is by construction confined \textit{only} to the original message $\Theta$ and the final estimate $\wh{\Theta}_N$; the transmitted and received signals in the course of interaction can be safely quantized at a resolution determined only by the channel noise (and \textit{not} the final estimation noise), as in commonplace communication systems. Figuratively speaking, the source bits are \textit{revealed} along the interaction process, where the number of bits revealed in every round is determined by the channel $\snr$. This desirable property has also been confirmed in simulations.  

Another important implementation issue is sensitivity to model assumptions. We have successfully verified the robustness of the proposed scheme in several reasonable scenarios including correlative noise, excess quantization noise, and multiplicative channel estimation noise. The universality of the scheme and its performance for a wider range of models remains to be further investigated. 

\section{Discussion}\label{sec:discussion}
Note that so far we have limited our discussion to the PAM symbol error rate $\Pe$. The bit-error rate is in fact lower, since an error in PAM decoding affects only a single bit with high probability \eqref{eq:pb}, assuming Gray labeling. However, note that the modulo-aliasing error will typically result in many erroneous bits, and hence optimizing the bit error rate does not yield a major improvement over its upper bound $\Pe$. Further fine-tuning of the scheme can be obtained by non-uniform power allocation over interaction rounds in both Terminal A and B; in particular, we note that Terminal B is silent in the last round, which can be trivially leveraged. We also note in passing that our scheme can be used in conjunction with FEC as an outer code, to achieve other power/delay/complexity/error probability tradeoffs. 

We note again that for any choice of $\snr$ and $\dsnr$, the error probability attained by our scheme cannot be made to vanish with the number interaction rounds while maintaining a non-zero rate, as in the noiseless feedback S-K scheme case. 
The reason is that \eqref{eq:gamma_n} implies a minimal attainable error probability dictated by the modulo-aliasing incurred by feedback noise. Equivalently, one cannot get arbitrarily close to capacity for a given target error probability; the reason is that while increasing the number of iterations would increase $\snr_N$ and reduce the PAM decoding error term in \eqref{eq:Pt}, it would also increase the modulo-aliasing error term in \eqref{eq:Pt}. Hence, our scheme is not \textit{capacity achieving} in the usual sense. However, it can get close to capacity in the sense of reducing the capacity gap using a very short block length, typically $N\approx 20$ in the examples presented.  To the best of our knowledge, FEC schemes require a block length typically larger by two order of magnitudes to reach the same gap at the same error probability. Consequently, the encoding delay of our scheme is also markedly lower than that of competing FEC schemes. Alternatively, compared to a minimal delay uncoded system under the same error probability, our scheme operates at a much lower capacity gap for a wide regime of settings, and hence can be significantly more power efficient. 

Another important issue is that of encoding and decoding complexity. Our proposed scheme applies only a two multiplications and one modulo operation at each terminal in each interaction round. This is markedly lower than the encoding/decoding complexity of FEC, even if non-optimal methods such as iterative decoding are employed. 

\bibliographystyle{IEEEbib}
\bibliography{bibtex_references}

\begin{thebibliography}{10}

\bibitem{ShannonFeedback}
C.~E. Shannon,
\newblock ``{The zero-error capacity of a noisy channel},''
\newblock {\em IEEE Trans. Inf. Theory}, vol. IT-2, pp. 8--19, Sep 1956.

\bibitem{S-K_partI}
J.~P.~M. Schalkwijk and T.~Kailath,
\newblock ``{A coding scheme for additive noise channels with feedback part I:
  No bandwidth constraint},''
\newblock {\em IEEE Trans. Inf. Theory}, vol. IT-12, pp. 172--182, Apr 1966.

\bibitem{S-K_partII}
J.~P.~M. Schalkwijk,
\newblock ``{A coding scheme for additive noise channels with feedback part II:
  Band-limited signals},''
\newblock {\em IEEE Trans. Inf. Theory}, vol. IT-12, pp. 183--189, Apr 1966.

\bibitem{horstein}
M.~Horstein,
\newblock ``Sequential transmission using noiseless feedback,''
\newblock {\em IEEE Trans. Info. Theory}, vol. IT-9, pp. 136--143, Jul 1963.

\bibitem{PM_Transactions}
O.~Shayevitz and M.~Feder,
\newblock ``{Optimal feedback communication via posterior matching},''
\newblock {\em IEEE Trans. Inf. Theory}, vol. 57, no. 3, pp. 1186--1222, Mar
  2011.

\bibitem{KimNoisyAWGNFeedbackAllertor}
{Y.-H. Kim, A. Lapidoth, and T. Weissman},
\newblock ``{On the reliability of Gaussian channels with noisy feedback},''
\newblock in {\em Proc. 41st Allerton Conf. Communication, Control Computing},
  Sep. 2006, pp. 364--371.

\bibitem{ChanceLove}
{Z. Chance and D. J. Love},
\newblock ``{Concatenated coding for the AWGN channel with noisy feedback},''
\newblock {\em IEEE Trans. Inf. Theory}, vol. 57, pp. 6633--6649, Oct. 2011.

\bibitem{SatoYamamoto}
{A. Sato and H. Yamamoto},
\newblock ``Error exponents of discrete memoryless channels and awgn channels
  with noisy feedback,''
\newblock in {\em ISITA}, 2010, pp. 452--457.

\bibitem{BurnashevNoisyBSC}
{M. V. Burnashev and H. Yamamoto},
\newblock ``{On reliability function of BSC with noisy feedback},''
\newblock {\em Problems of Information Transmission}, vol. 46, pp. 2--23, 2010.

\bibitem{BurnashevNoisyAWGNISIT}
{M. V. Burnashev and H. Yamamoto},
\newblock ``Noisy feedback improves the gaussian channel reliability
  function,''
\newblock in {\em ISIT}, 2014, pp. 2554--2558.

\bibitem{Tomlinson}
{ M. Tomlinson},
\newblock ``{New automatic equalizer employing modulo arithmetic},''
\newblock {\em Electronics Letters}, vol. 7, no. 5, pp. 138--139, 1971.

\bibitem{Harashima}
{H. Harashima and H. Miyakawa},
\newblock ``{Matched-transmission technique for channels with intersymbol
  interference},''
\newblock {\em IEEE Transactions on Communications}, vol. 20, no. 4, pp.
  774--780, 1972.

\end{thebibliography}

\end{document}